\newcommand{\ket}[1]{\ensuremath{|#1\rangle}}
\newcommand{\bra}[1]{\ensuremath{\langle#1|}}
\newcommand{\braket}[2]{\ensuremath{\langle#1|#2\rangle}}
\newcommand{\eg}{\emph{e.g.}}
\newcommand{\ie}{\emph{i.e.}}
\newcommand{\mf}{\mathbf}
\newcommand{\mc}{\mathcal}
\newcommand{\mb}{\mathbb}
\newcommand{\ot}{\otimes}
\newcommand{\idQ}{\mb{I}_{\text{Q}}}
\newcommand{\idE}{\mb{I}_{\text{E}}}
\newcommand{\p}{\scriptscriptstyle{+}}
\newcommand{\m}{\scriptscriptstyle{-}}
\newcommand{\spm}{\scriptscriptstyle{\pm}}
\newcommand{\smp}{\scriptscriptstyle{\mp}}
\newtheorem{thh}{Theorem}
\newtheorem{deff}{Definition}
\newtheorem{Rem}{Remark}
\begin{document}

\title[\it Stationary states]{Stationary states of two-level open quantum systems}

\author{Bart{\l}omiej Gardas}
\address{Institute of Physics, University of Silesia, PL-40-007 Katowice, Poland}
\ead{bartek.gardas@gmail.com}
\author{Zbigniew Pucha{\l}a}
\address{Institute of Theoretical and Applied Informatics, Polish Academy
of Sciences, Ba{\l}tycka 5, 44-100 Gliwice, Poland}
\ead{z.puchala@iitis.pl}

\begin{abstract}
A problem of finding stationary states of  open  quantum systems is addressed. 
We focus our attention on a generic type of open system: a qubit coupled to its
environment. We apply the theory of block operator matrices and find  stationary 
states of two--level open  quantum systems under certain conditions applied both
on the qubit and the surrounding.
\end{abstract}

\pacs{03.65.Yz, 03.67.-a, 02.30.Tb, 03.65.Db}   
\maketitle

\section{Introduction}

In quantum mechanics, the density operator $\rho$ of a quantum system is called a stationary state if
$[H,\rho]=0$, where $H$ is a given time-independent Hamiltonian of the system. Since $\rho$ satisfies
the Liouville-von Neumann equation $i\partial_t\rho=[H,\rho]$, it is clear that stationary states are invariant
with respect to the transformation $\rho \mapsto U_t\rho U_t^{\dagger}$, where $U_t=\exp(-\rmi Ht)$ is the
time evolution operator. In other words, stationary states do not change during the time evolution.
  
For low-dimensional closed systems, the stationary states can be obtained relatively easily~\cite{sakurai,galindo1}.
It is a common situation, that a small quantum system is immersed in other, mostly large, system called the 
environment~\cite{Leggett}. Such an open system does not evolve unitarily in time. An analysis of open quantum 
systems~\cite{davies,alicki,breuer} is much more complicated as they are a stage of a variety of physical 
phenomena~\cite{dajka2,dajka4,dajka5}. The famous decoherence process~\cite{zurek} may serve as an example. In open
quantum systems character of potentially existing stationary states is not obvious.     

There are various physical problems related to the properties of open quantum systems, which has already been addressed
and intensively discussed (see \eg, ~\cite{dajka1,dajka3}). Nevertheless, the procedure of deriving stationary states is
not one of them. The existence and properties of stationary states have significant importance in quantum information 
processing and quantum theory itself. One can pose natural questions: 

\begin{enumerate}
 \item do the stationary states exist for a given open system? 
 \item what features of a given model are responsible for existence of such states?
 \item how such states can be constructed? 
\end{enumerate}

The answers to the above questions are still incomplete. For example, it is  known that the stationary states
exist for completely positive (CP)~\cite{cp} evolution of the open system, this fact follows directly from 
Schauder fixed point theorem~\cite{Chuang}. However, this is only an existential result and so far there are 
no available methods to determine explicit form of stationary states. Furthermore, the very existence of the 
stationary states in general case is an open problem \eg, in the presence of initial system-environment correlations~\cite{NCP}.

The purpose of this paper is to propose the method of calculating the stationary states in the case of two-dimensional
open quantum system. The theory of block operator matrices~\cite{Vadim,bom,spectral} is adapted to achieve this goal. 
In particular, we use the \emph{Riccati operator equation}~\cite{RiccEq} to solve the eigenproblem for the total Hamiltonian. 
It is shown how to derive the stationary states by using the solution of the equation.
  
\section{Block operator matrix approach}

We begin with a brief review of the block operator matrices approach to the problem of decoherence in the case of a single 
qubit~\cite{mgr,gardas,gardas2}. Let $\mf{H}$ be the Hamiltonian of the total system. We will assume that it has the following form

\begin{equation}
 \label{total}
\mf{H} = H_{\text{Q}}\ot\idE+\idQ\ot H_{\text{E}} + \mf{H}_{\text{int}},
\end{equation}
where $H_{\text{Q}}$ and $H_{\text{E}}$ represent the Hamiltonian of the qubit and the environment, respectively,
while $\mf{H}_{\text{int}}$ specifies the interaction between the systems. The Hamiltonian $\mf{H}$ acts on 
the Hilbert space $\mc{H}_{\text{tot}}=\mb{C}^{2}\ot\mc{H}_{\text{E}}$, where $\mc{H}_{\text{E}}$ is the Hilbert
space (possibly infinite-dimensional) related to the environment. $\idQ$ and $\idE$ are the identity operators on 
$\mb{C}^2$ and $\mc{H}_{\text{E}}$, respectively.

Since the isomorphism $\mb{C}^2\ot\mc{H}_{\text{E}}\simeq\mc{H}_{\text{E}}\oplus\mc{H}_{\text{E}}$ holds true,
the Hamiltonian~(\ref{total}) admits the block operator matrix representation~\cite{spectral}: 

\begin{equation}
 \label{bom}
\mf{H} = 
  \begin{bmatrix}
  H_{\p} & V \\
  V^{\dagger} & H_{\m}
  \end{bmatrix}
  \quad\text{on}\quad
  \mathcal{D}(\mf{H})=\left(\mathcal{D}(H_{\p})\cap\mathcal{D}(V^{\dagger})\right)\oplus
\left(\mathcal{D}(V)\cap\mathcal{D}(H_{\m})\right).
\end{equation}
All the entries of~(\ref{bom}) are operators acting on $\mc{H}_{\text{E}}$. Moreover, the diagonal entries, \ie, $H_{\spm}$
are self-adjoint. In this paper, we will focus on the case in which $V$ is bounded, thus $V^{\dagger}$ is bounded as well; 
however, no assumption on boundedness of $H_{\spm}$ is made. Under these circumstances we have 
$\mathcal{D}(\mf{H})=\mathcal{D}(H_{\p})\oplus\mathcal{D}(H_{\m})$, where domains $\mc{D}(H_{\spm})$ are assumed to be dense in $\mc{H}_{\text{E}}$.

The generally accepted procedure to obtain the reduced time evolution of the open system, the so-called reduced dynamics, reads 

\begin{equation}
 \label{reduced}
 \rho_t  = \mbox{Tr}_{\text{E}}[\mf{U}_t\Phi(\rho_0)\mf{U}_t^{\dagger}] \equiv T_t(\rho_0). 
\end{equation}
Above, $\rho_0$ specifies the state of the open system at $t=0$. The map $\Phi$ assigns to each initial state $\rho_0$ a single
state $\Phi(\rho_0)$ of the total system. The assignment map must be chosen properly so that $T_t$ can be well-defined~\cite{reduced1,
reduced2,reduced3}. For instance, if no correlations between the systems are initially present, then $\Phi(\rho_0)=\rho_0\otimes\omega$,
for some initial state of the environment $\omega$. It is worth mentioning that, if the initial state cannot be factorized, the definition
of $\Phi$ is not accessible~\cite{korelacje,erratum}. The unitary operator $\mf{U}_t=\exp(-\rmi\mf{H}t)$ describes the time evolution of
the total system. 

The map $\mbox{Tr}_{\text{E}}(\cdot):\mathcal{T}(\mathcal{H}_{\text{E}}\oplus\mathcal{H}_{\text{E}})\rightarrow M_2(\mathbb{C})$
denotes the so-called partial trace: 

\begin{equation}
 \label{partial}
  \mbox{Tr}_{\text{E}}
  \begin{bmatrix}
   M_{11} & M_{12} \\
   M_{21} & M_{22}
   \end{bmatrix} 
   =
   \begin{pmatrix}
   \mbox{Tr}M_{11} & \mbox{Tr}M_{12} \\
   \mbox{Tr}M_{21} & \mbox{Tr}M_{22}
   \end{pmatrix}.
 \end{equation}
$\mbox{Tr}(\cdot)$ refers to the usual trace operation on $\mc{H}_{\text{E}}$,  
$\mathcal{T}(\mathcal{H}_{\text{E}}\oplus\mathcal{H}_{\text{E}})$ denotes the
Banach space of trace classes operator with the trace norm: $\|A\|_{1}=\mbox{Tr}(\sqrt{AA^{\dagger}})$, whereas
$M_2(\mathbb{C})$ is the Banach space of $2\times 2$ complex matrices. Note, the partial trace is a linear operation
transforming the block operator matrices (square brackets) to the ordinary matrices (round brackets). 

\section{Main results}
\label{main}

Fixed point theorems like Banach or Schauder indicate the existence of stationary states for a given evolution $T_t$.
However, there is no general analytical procedure to obtain explicit form of such states. In this section we propose
a method of deriving stationary states for two-level open quantum systems. The generalization to the higher-dimensions
seems to be possible. However, we will not deal with this issue in this paper. We begin with some definitions.

\begin{deff}
The density matrix $\rho$ is said to be a stationary state if it is invariant
with respect to reduced evolution, $T_t(\rho)=\rho$. 
\end{deff}
\begin{deff}
Let $X$ be an operator acting on the Hilbert space $\mc{H}_{\rm{E}}$. The subset $\Gamma_X$ of $\mc{H}_{\rm{E}}\oplus\mc{H}_{\rm{E}}$
defined as
    
   \begin{equation}
      \label{space}
      \Gamma_X:= \left\{    
         \begin{bmatrix}
              \ket{\psi}\\
              X\ket{\psi}
         \end{bmatrix} 
         :\ket{\psi}\in\mathcal{D}(X)\subset\mc{H}_{\text{E}}
              \right\}
    \end{equation} 
is said to be the graph of $X$.   
\end{deff}
The graph of a linear and closed operator is a subset of the Hilbert space, which is a Hilbert space itself equipped with the inner product

\begin{equation}
\label{inner}
\braket{\Psi_1}{\Psi_2} = \braket{\psi_1}{\psi_2}+\braket{\phi_1}{\phi_2},\quad
\ket{\Psi_i}=
\begin{bmatrix}
 \ket{\psi_i}\\
 \ket{\phi_i}
\end{bmatrix}\in\Gamma_X \quad\text{(}i=1,2\text{)}.
\end{equation}
$\braket{\psi}{\phi}$ is an inner product on $\mc{H}_{\text{E}}$. It is a known fact (see Lemma $5.3$ in~\cite{graph}) that the graph $\Gamma_X$
is $\mf{H}-$invariant, that is $\mf{H}\left(\Gamma_X\cap\mathcal{D}(\mf{H})\right)\subset\Gamma_X$ if and only if $X$ is a bounded solution (with 
$\text{Ran}(X|_{\mathcal{D}(H_{\p})})\subset\mathcal{D}(H_{\m})$) of the \emph{Riccati} equation:

\begin{equation}
 \label{ricc}
       XVX+XH_{\p}-H_{\m}X-V^{\dagger}=0\quad \text{on}\quad\mathcal{D}(H_{\p}).
\end{equation}
Along with the equation above we introduce the dual Riccati equation, namely

\begin{equation}
 \label{ricc2}
       YV^{\dagger}Y+YH_{\m}-H_{\p}Y-V=0\quad \text{on}\quad\mathcal{D}(H_{\m}).
\end{equation}
It is proved in~\cite{graph} that $Y=-X^{\dagger}$ is a solution (with $\text{Ran}(X^{\dagger}|_{\mathcal{D}(H_{\m})}\subset\mathcal{D}(H_{\p})$)
of~(\ref{ricc2}) if and only if the orthogonal complement of $\Gamma_X$, \ie, the subspace 

\begin{equation}
      \label{space2}
      \Gamma_X^{\bot}= \left\{    
         \begin{bmatrix}
              -X^{\dagger}\ket{\psi}\\
              \ket{\psi}
         \end{bmatrix} 
         :\ket{\psi}\in\mathcal{D}(X^{\dagger})\subset\mc{H}_{\text{E}}
              \right\}
    \end{equation}
is $\mf{H}$-invariant. It is straightforward to see that a bounded operator $X$ solves~(\ref{ricc}) if and only if $-X^{\dagger}$
is a solution of~(\ref{ricc2}). Therefore, $\Gamma_X$ as well as $\Gamma_X^{\bot}$ are $\mf{H}-$invariant if and only if $X$ is a
bounded solution of~(\ref{ricc}). In other words, $\Gamma_X$ is reducing subspace of $\mf{H}$ if and only if $X$ is a bounded solution
of~(\ref{ricc}). From considerations above follow also that $\Gamma_X$ and $\Gamma_X^{\bot}$ are $\mf{U}_t-$invariant.

\begin{deff}
\label{def}
Elements from the graph and its orthogonal complement are denoted by $\ket{X_{\psi}}$ and $\ket{X^{\psi}}$, respectively. The Riccati
states are defined as $\rho_{\psi}=\rm{Tr}_{\rm{E}}(\varrho_{\psi})$ and $\rho^{\psi}=\rm{Tr}_{\rm{E}}(\varrho^{\psi})$, where 
$\varrho_{\psi}:=\ket{X_{\psi}}\bra{X_{\psi}}$ and $\varrho^{\psi}:=\ket{X^{\psi}}\bra{X^{\psi}}$.                        
\end{deff}
The vectors $\ket{X_{\psi}}$ and $\ket{X^{\psi}}$ are \emph{not} normalized with respect to the norm induced by the inner
product~(\ref{inner}). Moreover, the states $\varrho_{\psi}$ and $\varrho^{\psi}$ are not factorisable (\ie, correlations occur),
unless $X\ket{\psi}\sim\ket{\psi}$ and $X^{\dagger}\ket{\psi}\sim\ket{\psi}$, respectively. However, they are 
$\mf{U}_t(\cdot)\mf{U}_t^{\dagger}-$invariant, which is obvious because the vectors $\ket{X_{\psi}}$ and $\ket{X^{\psi}}$ are
$\mf{U}_t-$invariant. As a consequence, the Riccati states $\rho_{\psi}$ and $\rho^{\psi}$ are $T_t-$invariant, where the map
$T_t$ has been defined in~(\ref{reduced}). Therefore, the set of all the Riccati states is invariant under the time evolution.
Nevertheless, the Riccati states are not the stationary states, in general. However, we show that the latter can be found among
the Riccati states. To be specific, we will prove the following

\begin{thh}
  \label{th1}
Let $X$ be a bounded solution of the Riccati equation~\eref{ricc}. Then
\begin{enumerate}
\item[i)] the Riccati state $\rho_{\psi}$ is a stationary state if the vector
          $\ket{\psi}$ is a eigenvector of the operator $Z_{\p}\equiv H_{\p}+VX :\mathcal{D}(H_{\p})\rightarrow\mc{H_{\text{E}}}$,
\item[ii)] the Riccati state $\rho^{\phi}$ is a stationary state if the vector
          $\ket{\phi}$ is a eigenvector of the operator $Z_{\m}\equiv H_{\m} - 
          V^{\dagger}X^{\dagger}:\mathcal{D}(H_{\m})\rightarrow\mc{H_{\text{E}}}$.
\end{enumerate}
\end{thh}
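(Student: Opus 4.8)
The plan is to show that the restriction of $\mf{H}$ to the reducing subspace $\Gamma_X$ (and of $\mf{H}$ to $\Gamma_X^{\bot}$) is intertwined with the operator $Z_{\p}$ (respectively $Z_{\m}$) acting on $\mc{H}_{\text{E}}$, so that an eigenvector of $Z_{\p}$ lifts to an eigenvector of the total, self--adjoint Hamiltonian $\mf{H}$ lying inside $\Gamma_X$. A total state built as the rank--one projector onto an eigenvector of $\mf{H}$ is automatically invariant under the global flow $\mf{U}_t(\cdot)\mf{U}_t^{\dagger}$, and I would then transport this invariance through the partial trace to obtain stationarity of the reduced Riccati state.

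For part i), take $\ket{\psi}\in\mathcal{D}(H_{\p})$ and apply $\mf{H}$ to the graph vector:
\begin{equation*}
\mf{H}\ket{X_{\psi}}=
\begin{bmatrix}
H_{\p} & V\\
V^{\dagger} & H_{\m}
\end{bmatrix}
\begin{bmatrix}
\ket{\psi}\\ X\ket{\psi}
\end{bmatrix}
=
\begin{bmatrix}
(H_{\p}+VX)\ket{\psi}\\ (V^{\dagger}+H_{\m}X)\ket{\psi}
\end{bmatrix}.
\end{equation*}
The top block is precisely $Z_{\p}\ket{\psi}$. Rewriting the Riccati equation~\eref{ricc} as $V^{\dagger}+H_{\m}X=X(H_{\p}+VX)=XZ_{\p}$ on $\mathcal{D}(H_{\p})$, the bottom block equals $XZ_{\p}\ket{\psi}$, so that
\begin{equation*}
\mf{H}\ket{X_{\psi}}=
\begin{bmatrix}
Z_{\p}\ket{\psi}\\ XZ_{\p}\ket{\psi}
\end{bmatrix}
=\ket{X_{Z_{\p}\psi}}.
\end{equation*}
This both re--derives the $\mf{H}$--invariance of $\Gamma_X$ and exhibits $\mf{H}$ acting on $\Gamma_X$ exactly as $Z_{\p}$ acts on $\mc{H}_{\text{E}}$ (so a generic graph vector merely flows within $\Gamma_X$, which is why the \emph{family} of Riccati states is invariant while an individual one need not be fixed). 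In particular, if $Z_{\p}\ket{\psi}=\lambda\ket{\psi}$ then $\mf{H}\ket{X_{\psi}}=\lambda\ket{X_{\psi}}$, i.e.\ $\ket{X_{\psi}}$ is an eigenvector of $\mf{H}$, with $\lambda\in\mb{R}$ forced by self--adjointness of $\mf{H}$.

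From here the conclusion is immediate. Since $\mf{U}_t=\exp(-\rmi\mf{H}t)$, the eigenrelation gives $\mf{U}_t\ket{X_{\psi}}=e^{-\rmi\lambda t}\ket{X_{\psi}}$, so the phases cancel in the projector,
\begin{equation*}
\mf{U}_t\,\varrho_{\psi}\,\mf{U}_t^{\dagger}
=e^{-\rmi\lambda t}\ket{X_{\psi}}\bra{X_{\psi}}e^{\rmi\lambda t}=\varrho_{\psi},
\end{equation*}
and the (unnormalised) total state $\varrho_{\psi}$ is stationary under the global unitary flow. Fixing the assignment map so that $\Phi(\rho_{\psi})=\varrho_{\psi}$ — consistent since $\rho_{\psi}=\mbox{Tr}_{\text{E}}(\varrho_{\psi})$ by Definition~\ref{def} — and applying~\eref{reduced},
\begin{equation*}
T_t(\rho_{\psi})=\mbox{Tr}_{\text{E}}\!\left[\mf{U}_t\,\varrho_{\psi}\,\mf{U}_t^{\dagger}\right]
=\mbox{Tr}_{\text{E}}(\varrho_{\psi})=\rho_{\psi},
\end{equation*}
which is the stationarity condition. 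Part ii) runs identically on $\Gamma_X^{\bot}$: applying $\mf{H}$ to $\ket{X^{\phi}}$ and invoking the dual Riccati equation~\eref{ricc2} with $Y=-X^{\dagger}$ yields $\mf{H}\ket{X^{\phi}}=\ket{X^{Z_{\m}\phi}}$, so an eigenvector $\ket{\phi}$ of $Z_{\m}$ lifts to an eigenvector $\ket{X^{\phi}}$ of $\mf{H}$ and hence to a stationary $\rho^{\phi}$.

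The block algebra above is routine; the two delicate points are operator--domain bookkeeping and the reduction step. For the first, one must confirm that $Z_{\p}=H_{\p}+VX$ is genuinely defined on $\mathcal{D}(H_{\p})$, which is exactly where the hypotheses that $V$ is bounded and $\mathrm{Ran}(X|_{\mathcal{D}(H_{\p})})\subset\mathcal{D}(H_{\m})$ are used to keep every manipulation inside the stated domains. The genuine conceptual obstacle is the passage from global invariance of $\varrho_{\psi}$ to stationarity of $\rho_{\psi}$: the map $T_t$ is only defined relative to an assignment $\Phi$, and the Riccati states are in general correlated, so the argument rests on the admissibility of choosing $\Phi$ to reproduce the correlated lift $\varrho_{\psi}$. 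Checking that this choice yields a legitimate reduced dynamics on the orbit of $\rho_{\psi}$ is where I expect most of the care to be required.
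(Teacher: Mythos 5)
Your argument is correct and follows essentially the same route as the paper's own proof: rewrite the Riccati equation as $V^{\dagger}+H_{\m}X=XZ_{\p}$ to show that $\ket{X_{\psi}}$ is an eigenvector of $\mf{H}$ with real eigenvalue, conclude $\mf{U}_t\varrho_{\psi}\mf{U}_t^{\dagger}=\varrho_{\psi}$, and pass through the partial trace to get $T_t(\rho_{\psi})=\rho_{\psi}$, with the dual computation on $\Gamma_X^{\bot}$ for part ii). Your additional remarks on the domain bookkeeping and on the role of the assignment map $\Phi$ make explicit points the paper only treats in the surrounding discussion, but they do not change the substance of the proof.
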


\begin{proof}
Let $Z_{\p}\ket{\psi}=\lambda\ket{\psi}$ for $\lambda\in\mb{C}$ and $\ket{\psi}\in\mathcal{D}(H_{\p})$.
From~(\ref{ricc}) we obtain that $V^{\dagger}+H_{\m}X=XZ_{\p}$, hence in view of~(\ref{bom}) the last equality
leads to $\mf{H}\ket{X_{\psi}} = \lambda\ket{X_{\psi}}$. Thus, the vector state $\ket{X_{\psi}}$ is the
eigenvector of the total Hamiltonian with the corresponding eigenvalue $\lambda$. Since $\mf{H}$ is self-adjoint
we have $\lambda\in\mb{R}$ and in consequence $\mf{U}_t\varrho_{\psi}\mf{U}_t^{\dagger}=\varrho_{\psi}$,
which ultimately leads to $T_t(\rho_{\psi})=\rho_{\psi}$.

In a comparable manner we have $\mf{H}\ket{X^{\phi}}=\xi\ket{X^{\phi}}$ for $\xi\in\mb{R}$ and 
$\ket{\phi}\in\mc{D}(H_{\m})$ so that $Z_{\m}\ket{\phi}=\xi\ket{\phi}$. Just as before 
$\mf{T}_t(\varrho^{\phi})=\varrho^{\phi}$, therefore $T_t(\rho^{\phi})=\rho^{\phi}$.
\end{proof}

At this point, some remarks, regarding theorem given above, should be made.   

\begin{Rem}
The question whether all stationary states are Riccati states or if it is possible that stationary states exist that
are not Riccati states is still open.
\end{Rem}
\begin{Rem}
 \label{re1}
Since the space $\Gamma_{X}$ is closed, we have the following decomposition $\mc{H}_{\text{tot}}= 
\Gamma_{X}\oplus\Gamma_{X}^{\bot}$. Thus the total Hamiltonian is similar to certain block diagonal
operator  matrix, $\mf{S}^{-1}\mf{H}\mf{S}=\mf{H}_{\text{d}}$, where

\begin{equation}
 \mf{H}_{\text{d}}= 
      \begin{bmatrix}
        Z_{\p} & 0 \\
        0 & Z_{\m}
      \end{bmatrix}
\quad\text{with}\quad\mathcal{D}(Z_{\spm})=\mathcal{D}(H_{\spm})\quad\text{and}\quad      
 \mf{S}= 
      \begin{bmatrix}
     \idE & -X^{\dagger} \\
        X & \idE
    \end{bmatrix}. 
\end{equation} 
This, implies that $\sigma(\mf{H})=\sigma(Z_{\p})\cup\sigma(Z_{\m})$. Therefore, the eigenvalues of $Z_{\spm}$ are exactly 
the eigenvalues of the Hamiltonian $\mf{H}$.
\end{Rem}

\begin{proof}
Let $X$ be a bounded solution of~(\ref{ricc}). $V$ is assumed to be bounded as well. From the definition of $Z_{\spm}$ we have 
$\mathcal{D}(Z_{\spm})=\mathcal{D}(H_{\spm})$, and thus $\mc{D}(\mf{H})=\mc{D}(\mf{H}_{\text{d}})$. Since $X$ solves the Riccati
equation~(\ref{ricc}), it is clear that $\mf{H}\mf{S}=\mf{S}\mf{H}_{\text{d}}$. To prove $\mf{H}\sim\mf{H}_{\text{d}}$ we will 
show that $\mf{S}$ is invertible and $\mf{S}^{-1}$ is bounded. Indeed, $\mf{S}=\mf{I}+\mf{X}$, where

   \begin{equation}
       \mf{X} = 
       \begin{bmatrix}
        0 & -X^{\dagger} \\
        X & 0
        \end{bmatrix}.
       \end{equation}
Since $\mf{X}^{\dagger}=-\mf{X}$, the spectrum of $\mf{X}$ is a subset of the imaginary axis. In particular, $-1\not\in\sigma(\mf{X})$ 
thus $0\not\in\sigma(\mf{S})$ and, hence, $\mf{S}$ has a bounded inverse. 
\end{proof}
\begin{Rem} 
  \label{Rem4}
The stationary states $\rho_{\psi}$, $\rho^{\phi}$ indicated in theorem~\ref{th1} are given by

 \begin{equation}
  \label{explicit}
  \rho_{\psi}=A
      \begin{pmatrix}
          1 & \langle X\rangle^{*}_{\psi} \\
           \langle X\rangle_{\psi} & \|X\psi\|^2
      \end{pmatrix}\quad\text{and}\quad
        \rho^{\phi}=B
      \begin{pmatrix}
          \|X^{\dagger}\phi\|^2 & -\langle X\rangle^{*}_{\phi} \\
          -\langle X\rangle_{\phi} & 1
      \end{pmatrix},
 \end{equation}
where $\ket{\psi}\in\mathcal{D}(H_{\p})$ and $\ket{\phi}\in\mathcal{D}(H_{\m})$ are the eigenvectors of $Z_{\m}$ and $Z_{\p}$,
respectively. $A=\mbox{\emph{Tr}}(\rho_{\psi})$, $B=\mbox{\emph{Tr}}(\rho^{\phi})$ are the normalization constants and $\langle 
X\rangle_{\varphi}=\bra{\varphi}X\ket{\varphi}$.
\end{Rem}
\begin{proof}
Since $\mbox{Tr}\ket{\psi}\bra{\phi}=\braket{\phi}{\psi}$, the equations~(\ref{explicit}) can be obtained directly from the 
definition~(\ref{def}) and the formula~(\ref{partial}). 
\end{proof}
%
\section{Examples}
 
\subsection{Spin-boson model}
%
In this subsection we will demonstrate an application of the presented method to a non-trivial example, namely, the paradigmatic
spin-boson model~\cite{SB_paradigm, SB_Parity}. Assume that the Hamiltonian of the qubit (spin-half) and its environment (boson) 
are in the following forms

\begin{equation}
 \label{eq:QandE}
    H_Q = \beta\sigma_z+\alpha\sigma_x \quad\text{and}\quad
    H_E = \omega a^{\dagger}a,      
\end{equation}
respectively. For the sake of simplicity, we consider only the case of a single boson. The interaction between the systems reads

\begin{equation}
 \mf{H}_{\text{int}} = \sigma_z\ot (g^{\ast}a+ga^{\dagger})
                       \equiv\sigma_z\ot V.
\end{equation}
In the above description, $\sigma_x$, $\sigma_z$ are the standard Pauli matrices and $\alpha, \beta \in \mb{R}$. The creation 
$a^{\dagger}$ and annihilation $a$ operators obey the canonical commutation relation (CCR) $[a,a^{\dagger}]=\idE$~\cite{sakurai}.
Parameters $\omega>0$ and $g\in\mathbb{C}$ represent the energy of the boson and the coupling constant between the qubit and the
environment, respectively.

If $\alpha=0$ (no energy exchange between the systems), the model can be solved, \ie, the reduced dynamics can be obtained
exactly~\cite{SB_Alicki,Luczka}. The solution describes the physical phenomena known as the pure decoherence or dephasing%
~\cite{dajka_cat}. On the other hand, when $\alpha\not=0$ the exact solution in not known. The objective is to estimate the
stationary states for the latter case. 

To proceed, we must clarify some technical aspects (\eg, domains of $H_{\spm}$). Clearly, the operators $a$ and $a^{\dagger}$ cannot
both be bounded since the trace of their commutator does not vanish~\cite{reed}. Therefore, the CCR holds only on some dense subspace
$\mc{D}_2$ of $\mc{H}_{\text{E}}$. Let $\mc{D}_1$ be the dense domain of both $a$ and $a^{\dagger}$, on which they are mutually adjoint,
that is, $(a^{\dagger})^{*}=a$, $a^{*}=a^{\dagger}$. At this point, it is not obvious that the sets $\mc{D}_1$, $\mc{D}_2$, having desire
properties, exist. The detailed construction can be found in~\cite{sz0} and the right choice is given by

\begin{equation}
 \mc{D}_k = \left\{\ket{\psi}\in\mc{H}_{\text{E}}: \sum_{n=0}^{\infty}n^k|\braket{\psi}{\phi_n}|^2<\infty\right\},\quad k =1,2.
\end{equation}
On $\mc{D}_1$ the creation and annihilation operators can be defined explicitly as (see also~\cite{sz1,sz2,sz3})

\begin{equation}
 \label{aa}
   a\ket{\phi} = \sum_{n=1}^{\infty}\sqrt{n}\braket{\phi_n}{\phi}\ket{\phi_{n-1}}, \quad
   a^{\dagger}\ket{\phi} = \sum_{n=0}^{\infty}\sqrt{n+1}\braket{\phi_n}{\phi}\ket{\phi_{n+1}},
   \quad\ket{\phi}\in\mc{D}_1.
  \end{equation}
$\{\ket{\phi_n}\}_{n=0}^{\infty}$ is an orthonormal basis in $\mc{H}_{E}$. From~(\ref{aa}) follows that
$a^{\dagger}\ket{\phi_n}=\sqrt{n+1}\ket{\phi_{n+1}}$ and $a\ket{\phi_n}=\sqrt{n}\ket{\phi_{n-1}}$, which in most books
on quantum mechanics is a definition of the creation and annihilation operators. However, the operators defined
in such a way are not closed; nevertheless, they are closable and their closures are given by~(\ref{aa}).

Since $a$ is closed, $H_E$ is a positive self-adjoint operator and $\mc{D}_2\subset\mc{D}_1$ is a core of $a$ (see \eg, Theorem
4.2.1 in~\cite{blank}). Henceforward, we assume that the basis $\{\ket{\phi_n}\}_{n=0}^{\infty}$ is composed with the eigenvectors
of $H_E$. In this case we have

\begin{equation}
H_E\ket{\phi} = \sum_{n=0}^{\infty}\omega n\braket{\phi_n}{\phi}\ket{\phi_n}, \quad\ket{\phi}\in\mc{D}_2.
\end{equation}
By choosing a suitable coupling constant $g$, it is possible to make $H_{\spm} = H_E \pm V$ self-adjoint on $D_{\spm}:=\mc{D}_1\cap\mc{D}_2=\mc{D}_2$.
To see this, let us first note that $(g^*a+ga^{\dagger})^{*}\supset g^*a+ga^{\dagger}$, thus $V$ is Hermitian (symmetric). Since $H_E$ is
self-adjoint and $V$ Hermitian, it is sufficient to show that $V$ is relatively bounded with respect to $H_E$ ($H_E$ bounded) and with 
$H_E$-bound less than one. The fundamental result of perturbation theory, known as the Kato-Rellich theorem~\cite{reed} assure Hermiticity
of $H_{\spm}$ in this case. 

Recall that $B$ is $A$ bounded if i) $\mc{D}(A)\subset\mc{D}(B)$ and ii) $\|B\ket{\phi}\|^2\le a\|A\ket{\phi}\|^2 + 
b\|\ket{\phi}\|^2$, for all $\ket{\phi}\in\mc{D}(A)$ and some nonnegative constants $a$, $b$. The infimum of all $a$ for which a corresponding
$b$ exists such that the last inequality holds is called $A-$bound of $B$. Note, sometimes it is convenient to replace the condition ii) by
the equivalent one: $\|B\ket{\phi}\|\le a\|A\ket{\phi}\| + b\|\ket{\phi}\|$. 

It is not difficult to see that if two operators $B_i$, $i=1,2$ are bounded with respect to the same operator $A$ and their relative
bound are less than $b_1$ and $b_2$, respectively; then $a_1B_1+a_2B_2$ is also $A$ bounded and its relative bounded is less than 
$|a_1|b_1+|a_2|b_2$ (see Lemma 6.1 in~\cite{gt}). In other words, the set of all $A$ bounded operator form a linear space. Therefore,
to see that $V$ is $H_E$ bounded it is sufficient to prove that both $a$ and $a^{\dagger}$ are $H_E$ bounded. To finish this note

\begin{eqnarray}
 \|a\ket{\phi}\|^2 &=& \sum_{n=1}^{\infty}n|\braket{\phi_{n-1}}{\phi}|^2 = \sum_{n=0}^{\infty}(n+1)|\braket{\phi_n}{\phi}|^2 \\\nonumber
 &\le& \sum_{n=0}^{\infty}n^2|\braket{\phi_n}{\phi}|^2 +\sum_{n=0}^{\infty}|\braket{\phi_n}{\phi}|^2 
  = \omega^{-1}\|H_E\ket{\phi}\|^2 + \|\ket{\phi}\|^2.
\end{eqnarray}
In comparable manner one can also verify that $\|a^{\dagger}\ket{\phi}\|^2\le \omega^{-1}\|H_E\ket{\phi}\|^2$. Since $H_E-$bound of both 
$a$ and $a^{\dagger}$ is less than one, the $H_E-$bound of $V$ is also less than one for $|z|<1/2$.

The block operator matrix representation of the spin-boson Hamiltonian reads

 \begin{equation}
  \label{eq:SBbom}
 \mf{H} = 
   \begin{bmatrix}
    H_{\p} & \alpha \\
    \alpha & H_{\m} 
   \end{bmatrix},
    \quad\text{where}\quad H_{\spm} = H_E\pm V
    \quad\text{and}\quad\mc{D}(\mf{H})=\mc{D}_2\oplus\mc{D}_2;
\end{equation}
the quantity $\alpha$ is understood as $\alpha\idE$. For the sake of simplicity we have set $\beta=0$, the example remains 
non-trivial because $[H_{Q}\ot\idE,\mf{H}_{\text{int}}]\not=0$. The corresponding Riccati equation takes the form

\begin{equation}
 \label{eq:SBriccati} 
\alpha X^2 + XH_{\p}-H_{\m}X-\alpha =0\quad\text{on}\quad\mc{D}_2.
\end{equation}
In order to solve this equation we define an operator $P$ as

\begin{equation}
 \label{parity}
 P\ket{\psi} = \sum_{n=0}^{\infty}e^{i\pi n}\braket{\phi_n}{\psi}\ket{\phi_n},\quad\ket{\psi}\in\mc{H}_{\text{E}}.
\end{equation}
Directly from~(\ref{parity}) we have $P=P^{\dagger}$ and $P^2=\idE$, thus $P$ is both self-adjoint and unitary. Formally, $P$
can be written as $P=\exp(i\pi H_{E})$, however, unlike $H_{E}$, $P$ is everywhere defined. The Hellinger-Toeplitz theorem 
guaranties that $P$ is bounded, which can also be seen directly. Indeed, form unitarity we obtain $\|P\psi\|=\|\psi\|$, for
$\ket{\psi}\in\mc{H}_{\text{E}}$, hence $\|P\|=1$. $P$ is, in fact, the bosonic parity operator~\cite{Bender}. We will show
that $X=P$ solves~(\ref{eq:SBriccati}). Since $P^2=\idE$ it is sufficient to show that 

\begin{equation}
 \label{syl}
 PH_{\m}-H_{\p}P = 0
  \quad\text{or equivalently}\quad 
  PH_{\spm}P=H_{\smp}\quad\text{on}\quad\mc{D}_2.
\end{equation}
In order to prove~(\ref{syl}) let us first note that $P\ket{\psi}\in\mc{D}_2$, for $\ket{\psi}\in\mc{D}_2$, which means
$\mbox{Ran}(P|_{\mc{D}_2})\subset\mc{D}_2$. This follows from $|e^{i\pi n}|=1$. Furthermore, $PH_EP=H_E$ and $PVP=-V$. 
The first equality is obvious, while the second one follows from $PaP=-a$ and $Pa^{\dagger}P=-a^{\dagger}$. And as a result we
obtain~(\ref{syl}).

The stationary states in this example can be written as 

 \begin{equation}
  \label{eq:SBstate}
\rho_{\spm} = \frac{1}{2}
   \begin{pmatrix}
    1 & r_{\spm} \\
     r^{*}_{\spm} & 1 
   \end{pmatrix},
   \quad\text{where}\quad 
   r_{\spm}=\bra{\psi_{\spm}}P\ket{\psi_{\spm}}
\end{equation}
and $\ket{\psi_{\spm}}$ are the eigenvectors of $Z_{\spm}=H_{\spm}\pm\alpha P$. Unfortunately, the solution of this eigenproblem is not known
for $\alpha\not=0$. However, one can determine certain bounds on $r_{\spm}$ using properties of $P$ and $\rho_{\spm}$. First, $r_{\pm}$ are 
real numbers because $P$ is self-adjoint. From the non-negativity of $\rho_{\spm}$ we obtain that $r_{\spm}\in[-1,1]$.

Equation~(\ref{eq:SBstate}) provides estimation of the stationary states of the spin-half immersed within the bosonic bath. This result has been
obtained without any approximations. Of course, $r_{\spm}$ can be computed approximately with the use of known methods. It is important to stress
that to obtain the exact reduced dynamics for the model in question one needs to resolve an eigenvalue problem for $Z_{\spm}$. 

\subsection{Commuting environment}
In the second example we consider the Hamiltonian in the following form

\begin{equation}
 \label{spin}
   \mf{H} = \alpha\sigma_x\ot\idE+\idQ\ot H_0+\sigma_z\ot H_1, \quad\alpha\not=0.
\end{equation}
Here we assume that linear, bounded operators $H_0$, $H_1$ commute. In this example we impose restriction to the spectra of $H_0$ and $H_1$, \ie{} 
$\sigma(H_0)$, $\sigma(H_1)$ are discrete and non-degenerated. The Hamiltonian~(\ref{spin}) describes a qubit in contact with an environment and
in the presence of the magnetic field $\vec{B}=B\hat{e}_{x}$, where $B\sim\alpha$. Examples of such systems occur in the literature, 
\eg~\cite{gardas3,spinStar1,spinStar2}. The block operator matrix representation of~(\ref{spin}) is given by~(\ref{eq:SBbom}) with
$H_{\spm}=H_0\pm H_1$ and $\mc{D}(\mf{H})=\mc{H}_{\text{E}}$. The corresponding Riccati equation reads~(\ref{eq:SBriccati}). 

Using the fact that $H_0$ and $H_1$ commute, so they have a common set of eigenvectors, we write

\begin{equation}
H_0\ket{\phi_n}=\lambda_n\ket{\phi_n}\quad\text{and}\quad H_1\ket{\phi_n}=\xi_n\ket{\phi_n},
\end{equation}
where $\lambda_n\in\sigma(H_0)$, $\xi_n\in\sigma(H_1)$ and $\braket{\phi_n}{\phi_m}=\delta_{nm}$, for $n,m\in\mb{N}$.
The Riccati equation has a positive and self-adjoint solution $X=f(H_1)$, where the function $f$ is given by

\begin{equation}
 f(x)=\frac{\sqrt{x^2+\alpha^2}-x}{\alpha},\quad\text{for}\quad x\in\sigma(H_1).
\end{equation} 
Unlike to the spin-boson model, in this case the eigenproblem for $Z_{\spm}$ can be readily solved. Indeed, we have

\begin{equation}
Z_{\spm}\ket{\phi_n} = \left[\lambda_n\pm\xi_n f(\xi_n)\right]\ket{\phi_n}.
\end{equation}
According to the Remark~\ref{Rem4} we obtain

\begin{equation}
  \label{final}
 \rho_n^{\p} = C_n
   \begin{pmatrix}
   1 & f(\xi_n) \\
   f(\xi_n)^{\ast} & |f(\xi_n)|^2
   \end{pmatrix}
   \quad\text{and}\quad
   \rho_n^{\m} = C_n
   \begin{pmatrix}
   |f(\xi_n)|^2 & -f(\xi_n)^{\ast} \\
   -f(\xi_n) & 1
   \end{pmatrix},
\end{equation}
where $ C_n=(1+|f(\xi_n)|^2)^{-1}$.
In this case there are no initial correlations between the systems since $X\ket{\phi_n}\sim\ket{\phi_n}$. We wish to emphasize that there
may exist other solutions of the Riccati equation. For an explicit example see Ref.~\cite{gardas2}.
%

\subsection{Sylvester equation}
%
In the case $V=0$, the Hamiltonian~(\ref{bom}) is already in a block diagonal form. One can notice that the Riccati equation simplifies to
the Sylvester equation:

\begin{equation}
 XH_{\p}-H_{\m}X = 0\quad\text{on}\quad\mc{D}(H_{\p}).
\end{equation}
There exist at least one solution, namely $X=0$. The corresponding stationary states are given by the projections 
$P_0=\mbox{diag}(0,1)=\ket{0}\bra{0}$ and $P_1=\mbox{diag}(1,0)=\ket{1}\bra{1}$. 

\subsection{All unbounded entries}

In the last example we consider an interesting example in which all the entries of $\mf{H}$ are unbounded, but still the solution of the 
Riccati equation exists as a bounded operator. To see this, let us choose $H_{\spm}=H_0$ with domain $\mathcal{D}(H_0)$ and let us assume
that $V$ is self-adjoint with domain $\mathcal{D}(V)$. Then, the Riccati equation 

\begin{equation}
  \label{cap}
  XVX+XH_0-H_0X-V = 0
  \quad\text{on}\quad
  \mathcal{D}(H_0)\cap\mathcal{D}(V)
\end{equation}
has at least two bounded solutions, that is $X_{\spm}=\pm\idE$. The stationary states read

\begin{equation}
 \rho_{\spm} = 
 \frac{1}{2}
      \begin{pmatrix}
       1 & \varepsilon_{\spm} \\
       \varepsilon_{\spm} & 1
      \end{pmatrix},\quad\text{where}\quad\varepsilon_{\spm}=\pm 1.
\end{equation}
An example in which all the entries of a block operator matrix are bounded and the solution of the Riccati equation is unbounded has been
provided in~\cite{Vadim}.
\section{Summary.}
In this paper we have proposed the method of calculating the stationary states for two-level open quantum system. The theory of block operator
matrices has been adapted to achieve this purpose. In particular, we have related the solution of the algebraic Riccati equation to stationary
states. In the presented method, the stationary states are generated from the stationary states of the total system by tracing out the environment. 
Our investigation includes the case when the initial system-environment correlations occur. In fact, this case is embedded in the method since the
eigenstates of the total Hamiltonian are entanglement.

At the end, we want to stress that the method cannot be used when the total Hamiltonian is not known. Such a situation arises \eg, when the details
of the interaction between the systems are not accessible. We hope that despite aforementioned weaknesses the results of the paper may serve as a 
starting point for further investigations.

\ack
The authors are deeply grateful to the referee for very careful reading 
of the original manuscript and valuable suggestions.
B. Gardas would like to thank Jerzy Dajka for his suggestions and comments. 
This work was financed from the Polish science budget resources in the years
2010-2013 as a research projects: grant number N N519 442339 and project number
IP 2010 0334 70.%
\section*{References}

%
%

%
\end{document}